\documentclass{article}
\usepackage{amssymb,bm}
\usepackage{amsmath}
\usepackage{amsthm}
\usepackage{amsfonts}
\usepackage{mathrsfs}
\usepackage{appendix}
\usepackage{hyperref}
\usepackage{authblk}
\usepackage{color}
\usepackage{optidef}
\usepackage{graphicx}
\usepackage{algpseudocode}
\usepackage{ifthen}
\newtheorem{lemma}{Lemma}

\theoremstyle{remark}

\usepackage{graphics}
\usepackage{graphicx}
\usepackage{geometry}
\usepackage{algorithm}
\hypersetup{colorlinks,linkcolor={blue},citecolor={blue},urlcolor={red}}

\usepackage[english]{babel}

\usepackage{cite}
\usepackage{threeparttable}
\usepackage{float}
\floatstyle{plaintop}
\restylefloat{table}

\floatstyle{plain}
\restylefloat{figure}
\usepackage{placeins}

\begin{document}

\title{Optimal Adversarial Testing: Extracting Honest Test Results from Dishonest Test Takers}


\author{Owen Cox{\footnote{Department of Electrical and Computer Engineering, University of Iowa, Iowa City, IA 52242.}}\,~~
April Xu{\footnote{ Iowa City Math Club.
}}\,~~
 Weiyu Xu {\footnote{
Department of Electrical and Computer Engineering, University of Iowa, Iowa City, IA 52242. } \,~~
\footnote {Owen Cox helped coded and ran the numerical experiments, April Xu derived (\ref{mainformula}),  helped with separate codes for verifications, ran the  numerical experiments and prepared references and formulas in LaTex. Weiyu Xu supervises the research. }
}}
\maketitle

\begin{abstract}
In applications, it is often required to test objects or people to determine their qualities in terms of certain metrics. However, besides being naturally noisy, the test results can be corrupted by adversarial behaviors of objects or people being tested (test takers). For example, dishonest test takers can cheat in the exams to distort the test results. With the development of AI technologies, such distortions driven by cheating using AI technologies are becoming more commonplace and severe. In this paper, we propose optimal testing strategies which can still recover needed test results even if there are cheaters polluting the results.  The proposed testing strategies will optimally re-test selected group of test takers using different testing security measures. We determine the optimal testing strategies using a dynamic programming method.

\end{abstract}

Keywords: optimal testing strategy, sampling theory, adversarial game 


\section{Introduction}
In many applications, it is often required to test objects or people to determine their qualities in terms of certain metrics. That would require sampling or testing each object or person to obtain test or sampling results of them. 

However, the test results can be noisy since they may be affected by natural perturbations or noises. In addition, besides being naturally noisy, the test results can be corrupted by adversarial behaviors of objects or people being tested (test takers). For example, dishonest test takers can cheat in the exams to distort the test results. With the development of AI technologies, such distortions driven by AI-aided cheating are becoming more commonplace and more severe \cite{DJEBBARI_2025, AIbasedcheating, fer333, article_1553831}. In order to make the final testing results faithful, the testing or sampling of the objects or people must be done in a way such that the test organizers are able to thwart the adversarial actions of tested objects or people. 

There may be different goals of testing, for example, determining the admission of students to a program or a collage, or determining the ranking of persons in a competition. In this paper, we consider testing which aims to select a number of top students. Suppose that there are $n$ people that we need to test. At the end of the tests, we need to select the top $k$  ($k<n$) people, say, to declare them as winners out of the $n$ people. For example, in the American Mathematics Competition (AMC) series organized by Mathematical Association of America (MAA), the organizers aim to select (approximately) $60$ top performing students to participate in the Mathematical Olympiad Program (MOP) each year, out of around 300,000 students \cite{bajnok2024amcmatters}. Or even one step further, the AMC organizers aim to select the USA national teams out of the approximately $300,000$ students to attend international math competitions including EGMO (European Girls' Mathematical Olympiad), RMM (Romanian Master of Mathematics), and IMO (International Mathematical Olympiad). We note that such scenarios happen in many disciplines, such as selection of physics, chemistry, biology, and informatics teams, or even in college exams. 

The successful selection of the top $k$ students depends on the faithfulness of the test results, which in turn depends on the honesty or integrity of test takers. In an alarming trend, some dishonest test takers use AI tools or other cheating mechanisms to cheat in exams, leading to unfaithful test results.  Contrary to traditional ways of cheating, the new ways of cheating using AI tools are very effective in increasing cheaters' test scores. Moreover, such cheating aided by AI technologies is more stealthy, and even harder to catch and detect  \cite{DJEBBARI_2025, AIbasedcheating, fer333, article_1553831}. Cheating test takers can go to various lengths to cheat: some may simply copy from neighbors; some may steal the exam contents beforehand by purchasing them or by computer system hacking; some may purchase answers from corrupt human coaches; some may use concealed or planted cell phones to access AI technologies to answer these questions; and some even use more stealthy AI glasses to cheat.

To defeat such cheating, it would require test organizers having much stricter proctoring, such as using metal detectors to search for any concealed electronic devices including cell phones and smart glasses on each test taker; or to make the test sites electromagnetic (EM) signal proof; or to carefully check the test sites to prevent cheaters from planting AI devices including cell phones in hidden places (note  that cheaters can avoid proctors and use electronic devices to access AI tools in bathroom break).  Unfortunately, it may be unrealistic or too expensive to implement these measures on a large scale. Sometimes local laws and policies do not even allow some of the strict-proctoring measures at exam sites.  When the number of students can be large, it will be expensive to implement strict proctoring on a large scale. For example, it may be simply impossible to adopt the strictest proctoring measures for all the 300,000 students taking part in the AMC tests.  

This raises the following question: how to achieve the goal of selecting the top $k$ students \emph{accurately}, \emph{economically} and \emph{efficiently} even if there may be many cheaters in the tests?

In this paper, we propose optimal testing strategies which can still faithfully achieve the testing goals even if there are cheaters polluting the raw testing results.  The proposed testing strategies will optimally re-test selected group of test takers using different levels of testing security measures. We determine the optimal (re)testing strategies using a dynamic programming method. We also consider a game-theoretic setting where the adversarial test takers can allocate cheating resources according to the testing strategies of test organizers. Our proposed approach can be extended to a continuous-valued version (when the number of tested people or objects goes to infinity) of the considered problem.  To the best of our knowledge, this is the first paper which formulates the adversarial testing problem in terms of AI cheating and computes the optimal re-testing strategy when there are adversarial test takers using AI technologies.

\section{Mathematical Model}\label{Sec:mathmodelMarkov}

Suppose that there are $n$ people that we need to test. The organizers have a metric $M(T, A)$ for evaluating the effectiveness of the testing strategy, where  $M(T, A)$ is a function of testing strategy $T$ and the test takers' adversarial cheating strategy $A$. We let the set of testing strategies be $\mathcal{T}$, and the set of test takers' cheating strategies be $\mathcal{A}$. Note that the testing strategy needs to specify:  how many tests are given; which set of people are respectively in each test; and how secure each test should be made to be. Note that the test organizers can adaptively determine the testing actions based on the results of the tests that have already happened.

The test organizer tries to maximize $M(T, A)$  while the adversarial test taker tries to minimize $M(T, A)$. Thus the optimal testing problem becomes a max-min problem: 
$$ \max_{T\in \mathcal{T}} \min_{A\in \mathcal{A}} M(T,A) .$$

In this paper, in particular, we consider the metric of successfully selecting the true $k$ top students out of $n$ students, where $k<n$. If the selection is successful, the metric is $1$; otherwise, the metric is $0$.  We assume that, during each test, the same cost is used in proctoring each student. Moreover, for each test,  a student has a student-specific ``cheating'' budget amount all of which the student can use for cheating in each single test. We assume that if the proctoring cost spent on each student exceeds a function of a student's budget/capability for cheating, an accurate testing result will be obtained for the student, and, if the student is not among the top $k$ students, the student will be eliminated or disqualified for the selection purpose. The optimal testing problem considered above thus translates into a problem of designing the optimal strategy which can select the top $k$ students correctly and also incurs the smallest amount of total testing cost.  Please note that the test organizers do not know the exact proctoring cost for each individual student that is needed to produce a faithful result, but the organizers know the distribution of such costs over the set of all the students.

Instead of performing the most strict tests for all the students directly, we propose to selectively administer tests at different security levels. The students who pass earlier tests with potentially lower security levels (tests with lower testing cost per student) will advance to be re-tested with higher security levels (tests with higher testing cost per student). Using the results obtained at different security levels administered for different groups of students, we can select the top $k$ students accurately and efficiently. To find the optimal testing strategy, we will need to determine how many tests we will perform and what will be the security level for each test.  

To describe the algorithm for finding the optimal test strategy, we first specify some notations and basic assumptions. 
\begin{itemize}
\item $n$ is the number of students. $k$ is the number of top students we want to select out of these $n$ students.
\item Let $a_t$ be the per-person proctoring cost in the $t$-th test.
\item The proctoring cost per person is related to the exam security level: the higher such cost is, the higher is the exam security level.
\item Let the cost per person for proctoring the test be $x$. Let $f(x)$ be the number of students that pass a test when the per-person cost for proctoring is $x$.
\item Let $g(i)$ be the needed per-person proctoring cost which is just enough to eliminate student with index $i$. 
\item We rank the students in an increasing order of $g(i)$. Namely, the bigger $i$ is, the larger is $g(i)$.
\item A person who does not rank among the top $k$ students passes a test if and only if the exam security level is below a certain person-specific threshold $g(i)$: if a person is honest, that threshold (namely $g(i)$) is low; while if a person is dishonest, that threshold (namely $g(i)$) is high. 
\item A person who ranks among the top $k$ students will always pass any exam no matter what the exam security level is (since the exam is designed for the top $k$ students to pass).    
\item $a_0=0$, namely before any test starts, no cost is incurred. 
\item $f(a_0)=n$, namely initially there is no test, so the number of remaining students is $n$. 
\end{itemize}

We remark that our mathematical model can be extended beyond testing using different security levels. For example, $g(i)$ may simply model testing cost needed to eliminate the $i$-th person for consideration in any application scenario. That may include designing tests with different levels of problem difficulty.  We further remark that the test organizers do not know the exact $g(i)$ number for any specific student: they only know the $n$ values of $g(i)$'s or the distribution of $g(i)$'s, but do not know the correspondence between these $n$ values and each student. 

In fact, mathematically, this problem is about sampling theory \cite{kutyniok2012theoryapplicationscompressedsensing}. In this problem, we have $n$ objects to test.  Each object, say, object $i$, has a tolerance threshold $g(i)$. When the test organizers test object $i$ with a testing cost $c$, the testing result will be $1$ (namely pass) if $c\geq g(i)$ but will be $0$ (namely fail) if $c< g(i)$. The mathematical question just becomes: how to design the tests such that we can use the smallest possible total testing costs to find the $k$ objects with the largest $g(i)$'s? Note that the sampling is adaptive: one can determine the design of future tests using the testing results obtained in previous tests. Moreover, tests must be administered in batches. In each batch, all the passing students from earlier tests must be tested using the same test cost $c$.

\section{Optimal testing strategy}

We now analyze the optimal testing strategy's properties and give an efficient algorithm which can find the optimal strategy. The algorithm is of complexity $O(n^2)$, and the complexity can possibly be further reduced. 

\subsection{Total testing costs}

Suppose that we do only one test with per-person cost as $a_1$, the total testing cost will be 
$$f(a_0)a_1.$$
Similarly, if we do only two tests with two different levels of security measures, the total cost will be
$$f(a_0)a_1+f(a_1)a_2.$$
If we do only three such tests, the total cost will be
$$f(a_0)a_1+f(a_1)a_2+f(a_2)a_3.$$
Generalizing this, if we do $T$ tests with $T$ different levels of security measures, the total cost will be



\begin{equation}
\label{mainformula} \sum_{t=1}^{T} f(a_{t-1})a_t.    
\end{equation}

We will try to determine $T$, and all the per-person cost $a_0$, $a_1$, ..., and $a_{T}$ to minimize the total cost. 

\subsection{A key lemma for dynamic programming}

We have the following key lemma specifying the structure of the optimal testing strategy. 

\begin{lemma}
Suppose that the testing measures $a_1$, $a_2$, ..., $a_{m}$, ..., and $a_{T}$ are a set of sequential optimal test measures to select the top $k$ students out of $n$ students. Then the $a_{m+1}$, ... $a_{T}$ must be the optimal testing strategy for selecting the top $k$ students out of the  $f(a_{m})$ students who pass the test under security measure $a_{m}$.
\end{lemma}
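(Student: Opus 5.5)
This is the standard optimal-substructure (cut-and-paste) argument for dynamic programming, applied to the cost formula (\ref{mainformula}).

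\textbf{Splitting the cost.} Write the total cost of the optimal strategy as
\[
C=\sum_{t=1}^{m} f(a_{t-1})a_t+\sum_{t=m+1}^{T} f(a_{t-1})a_t .
\]
The first sum depends only on $a_1,\dots,a_m$. The second sum is exactly the cost, under (\ref{mainformula}), of running the tests $a_{m+1},\dots,a_T$ on the population of $f(a_m)$ students who survive test $a_m$. This uses a \emph{relabeling step}, which I would make explicit.

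\textbf{Relabeling step.} The survivors of test $a_m$ are precisely the top $k$ students together with those non-top students $i$ with $g(i)\le a_m$. Pass/fail in a later test depends only on $g(i)$ and the current cost, not on earlier tests. Hence, for any $x$, the number of these survivors who pass a test of cost $x$ equals $f(x)$ when $x\ge a_m$. I would argue that w.l.o.g.\ the $a_t$ are nondecreasing: a test cheaper than an earlier one eliminates nobody new and only adds cost. So the second sum is the cost functional of the subproblem with initial population $f(a_m)$ playing the role of $f(a_0)=n$. Feasibility, meaning the final survivors are exactly the top $k$, likewise depends only on the tail sequence acting on this subpopulation.

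\textbf{Exchange argument.} Suppose, for contradiction, that $a_{m+1},\dots,a_T$ is not optimal for the subproblem. Then there is a strategy $b_{m+1},\dots,b_{T'}$ that selects the top $k$ out of the $f(a_m)$ survivors with strictly smaller cost
\[
\sum_{t=m+1}^{T'} f(b_{t-1})b_t ,\qquad b_m:=a_m .
\]
By the remark above we may assume $b_{m+1}\ge a_m$. Concatenate $a_1,\dots,a_m,b_{m+1},\dots,b_{T'}$. This is a valid strategy for the full problem: after $a_m$ the surviving set is the same as before, and the $b$'s then reduce it to the top $k$. Its cost is the unchanged first sum plus a strictly smaller second sum, which contradicts the optimality of $a_1,\dots,a_T$.

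\textbf{Main obstacle.} The delicate step is the relabeling step, i.e.\ justifying that the subproblem's pass-count function is the same $f$ and that the prefix cost does not interact with the tail. The point to handle carefully is a tail test with cost below $a_m$. I would dispose of it by the monotonicity reduction: replacing any such $b_t$ by $\max(b_t,a_m)$, or deleting it, does not increase cost and does not change the surviving set.
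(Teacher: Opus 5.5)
Your proposal is correct and follows essentially the same cut-and-paste route as the paper. The paper likewise splits the cost (\ref{mainformula}) at index $m$, compares the actual tail cost $B$ with the optimal subproblem cost $A$, and derives a contradiction from $A<B$ by splicing the cheaper tail onto $a_1,\dots,a_m$; your relabeling and monotonicity remarks make explicit what the paper takes for granted.

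Two small slips need fixing. First, under the paper's model a test of cost $x$ eliminates a non-top student $i$ exactly when $x\ge g(i)$. So the survivors of $a_m$ are the non-top students with $g(i)>a_m$, not $g(i)\le a_m$, and only this orientation gives your claim that $f(x)$ of them pass a test of cost $x\ge a_m$. Second, a tail test cheaper than $a_m$ should be deleted rather than raised to $\max(b_t,a_m)$, because raising it strictly increases that round's cost while still eliminating nobody.
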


\begin{proof}
Based on the general formula, 
\[ \sum_{t=1}^{T} f(a_{t-1})a_t,\]
when $a_1$, $a_2$, ..., and $a_m$ are given, the 
$a_{m+1}$, ... $a_{t}$
must optimize the following expression
\[ \sum_{t=m+1}^{t'} f(\tilde{a}_{t-1})\tilde{a}_t,\]
where $t'$ may not be equal to the original $T$ used in the original optimal testing strategy $a_1$, $a_2$, ..., and $a_T$; $\tilde{a}_{m}=a_{m}$ and $f(a_{t'})=k$.

We claim that \[ A=\sum_{t=m+1}^{t'} f(\tilde{a}^*_{t-1})\tilde{a}^*_t\] is equal to \[ B=\sum_{t=m+1}^{T} f(a_{t-1})a_t,\] where $\tilde{a}^*$'s mean the per-person cost in the corresponding optimal solution minimizing $A$. 

First of all, $A\leq B$ since $\tilde{a}^*$'s form the optimal solution. Secondly, if $A<B$, we can just replace the original sequence of measures $a_{m+1}$, ... $a_{T}$ with $\tilde{a}^*$'s, and this gives a smaller value for \[ \sum_{t=1}^{T} f(a_{t-1})a_t,\] contradicting the assumption that the testing measures $a_1$, $a_2$, ..., $a_{m}$, ..., and $a_{T}$ are a set of optimal test measures.  Thus,  $a_{m+1}$, ... $a_{t}$ also form the optimal testing strategy for selecting the top $k$ students out of the  $f(a_{m})$ students who pass the test under security measure $a_{m}$.   

\end{proof}

\subsection{Algorithm for finding the optimal testing strategies}

We first present the key idea and the higher-level pseudocode for our algorithm  in Algorithm \ref{alg:psuedo_ot}.  The key idea is a dynamic programming method, where we can use the optimal testing strategy of selecting $k$ students out of $l$ students in determining the optimal strategy of selecting $k$ students out of $m$ students, where $m>l$. More specifically, suppose that we are facing the task of selecting $k$ students out of $m$ students. Suppose the first test conducted for such a task is just secure enough to eliminate $q$ lowest-ranking students (ranked in terms of the required per-person test cost to eliminate the test-taker for further consideration), then the remaining tests form an optimal testing strategies for the task of selecting $k$ students out of the $(m-q)$ students, based on the key lemma mentioned above.  We give our higher-level pseudocode as follows. Note that the top $l$ test takers in the descriptions mean the test takers with the top $l$ per-person proctoring cost $g(i)$'s.

\begin{algorithm}
\caption{High-level explanation: optimal testing strategy finding algorithm}\label{alg:psuedo_ot}
\begin{algorithmic}[1]
\Require $n$, $k$, $g(i)$, vector ``$cost$'' of $n-k+1$ elements, vector ``$next-index$'' of $n-k$ elements. The indices go from $1$ to the length of the vectors.
\Ensure Optimal testing strategy with the smallest testing cost.
\State List the $n$ test takers in an increasing order of their respective values $g(i)$'s

\State Initialize the optimal cost for selecting the top $k$ test takers out of these $k$ test takers as $0$ (no need to do any testing). Namely, set $cost[n-k+1]\gets 0$
\State $i \gets n-k$
\While{$i \geq 1$} 
\Comment{ Comments: ``This is for computing the the optimal cost for selecting the top $k$ honest test takers out of the top $n-i+1$  test takers''} 

\For{$1 \leq j \leq (n-i+1-k)$}

Compute the total testing cost by letting the first test's cost be just enough so that the

lowest ranked $j$ test takers of the top $(n-i+1)$ test takers are eliminated. 
\EndFor

Pick the minimum cost among the computed $(n-i+1-k)$ costs above. Set that minimum

cost as the smallest possible cost for selecting the top $k$ test takers out of the top $(n-i+1)$ 

test takers. 
\State $i\gets i-1$
\EndWhile
\end{algorithmic}
\end{algorithm}

Then we give the detailed algorithm as follows in Algorithm \ref{alg:ot}. 
\begin{algorithm}
\caption{Optimal testing strategy finding algorithm}\label{alg:ot}
\begin{algorithmic}[1]
\Require $n$, $k$, $g(i)$, vector ``$cost$'' of $n-k+1$ elements, vector ``$next-index$'' of $n-k$ elements. The indices go from $1$ to the length of the vectors.
\Ensure Optimal testing strategy with the smallest testing cost.
\State $cost[n-k+1]\gets 0$
\State $i \gets n-k$
\While{$i \geq 1$}
\For{$1 \leq j \leq n-k-i+1$}

\If{$g[n-k-j+1]\times(n-i+1)+cost[n-k-j+2]< cost[i]$}
    \State next-index$[i] \gets n-k-j+2$ 
\EndIf
\State cost$[i] \gets \min \{cost[i], g[n-k-j+1]\times(n-i+1)+cost[n-k-j+2]\}$
\EndFor
\State $i\gets i-1$
\EndWhile
\end{algorithmic}
\end{algorithm}
Here $next-index[i]$ means that the optimal testing for selecting the top $k$ students out of students indexed from $i$ to $n$ will first use the test which is just enough to eliminate students with index smaller or equal to $next-index[i]-1$; and then we are left with the task of selecting the top $k$ students out of students indexed from $next-index[i]$ to $n$. Here $cost[i]$ denotes the cost needed to select the top $k$ students out of students with indices $i$ (inclusive) to $n$.

\section{Numerical results}
\label{sec:stationaryfor3agent}
 In this section, we present numerical results of the optimal testing strategy. 

In the first example, we consider the job of selecting $60$ top performers out of $300,000$ students.  For the top 60 performers,  no matter how secure the test is, they will always pass the exam (since the test is designed at a level such that the top 60 students pass). For $298000$ students among these $300,000$ students, the needed per-person proctoring cost will be 10 dollars, since these students are honest and will not even consider cheating under normal proctoring. For another $1000$ students, the needed per-person proctoring cost to counter and defeat their cheating strategies is $100$ dollars. For another $900$ students, the needed per-person proctoring cost to counter and defeat their cheating strategies is $200$ dollars. For another $40$ students, the needed per-person proctoring cost to counter and defeat their cheating strategies is $4000$ dollars. Algorithm \ref{alg:ot} gives the following optimal test strategies: there will be 3 rounds of tests in total, the first-round test incurs per-person proctoring cost of $10$ dollars for $300000$ students, the second-round test incurs per-person proctoring cost of $200$ dollars for $2000$ students, and the third-round test incurs per-person proctoring cost of $4000$ dollars for $100$ students. The optimal total testing cost is thus $3800000$ ($3.8$ million) dollars.  In contrast, if we just directly use the strictest proctoring measure in the fist place, we would need $4000\times 300000=1200000000$  ($1.2$ billion) dollars: in terms of total test cost, the optimal testing strategy is only around $\frac{3}{1000}$, or $0.3\%$ fraction of this naive approach, thus saving significant test cost without sacrificing the final accuracy of the selection.

In a second example, we consider a case of selecting 100 top performers from 50,000 students. For 20 students outside those 100 students, the needed proctoring cost will be 200 dollars per student. For 40 students, the needed proctoring cost will be 100 dollars per student; for 500 students, the proctoring cost will be  50 dollars per student; for 900 students, the proctoring cost will be 25 dollars per student; for 10000 students, the proctoring cost will be 20 dollars per student, and for the remaining number of students, the needed proctoring cost will be 10 dollars per person. The optimal cost for this strategy according to Algorithm \ref{alg:ot} will be 835200 dollars, while the naive cost strategy of directing testing with the highst per-person cost will be 10000000 dollars. Therefore, the optimal strategy is only 8.4\% the cost of the naive strategy.
The optimal strategy in this case involves five tests. After the first test, 11560 students remain. After the second, 1560 remain, after the third, 660 remain. After the fourth test, 160 students remain, and after the final test, we have selected the top 100 performers.

In the third example, we consider the job of selecting $60$ top performers out of $30,000$ students.  For the top 60 performers,  no matter how secure the test is, they will always pass the exam (since the test is designed at a level such that the top 60 students pass). For $27000$ students among these $300,000$ students, the needed per-person proctoring cost will be 10 dollars, since these students are honest and will not even consider cheating under normal proctoring. The formula for the needed cost of a proctored test to eliminate cheaters for student $i$, where $i$ ranges from $27001$ (inclusive) to $29940$ is $10\times (1+\frac{(i-27000)^2}{10^5})$. In this optimal testing strategy,  the total number of tests done is $4$: the first test will disqualify 27007 students by using per-person test cost of 10.004900000000001  (copied from the code to distinguish it from 10) dollars; the second test will further bring the total number of disqualified students to 27549 (namely disqualifying $542$ more students) by using  per-person test cost of 40.1401 dollars; the 3rd test will further bring the total number of disqualified students to 28784 (namely disqualifying 1235 more students) by using  per-person test cost of 328.2656 dollars; and the 4th test will further bring the number of disqualified students to $29940$ (namely disqualifying 1156 more students) using per-person test cost of $874.3600000000001$ dollars.  The total testing costs will be 228,8085 dollars, only $8.7722894459\%$  of the total cost of the naive approach of using strictest proctoring for every student. 

However, if we change the above problem's corresponding expression to 
$10\times(1+\frac{(i-27000)^2}{10^6})$, the optimal testing strategy consists of three tests: the first test will disqualify 27063 students using a per-person test cost of 10.03969 dollars; the 2nd test will disqualify 28645 students using a per-person test cost of 37.060249999999996 dollars; and the 3rd test will disqualify $29940$ students  using a per-person test cost of 96.43599999999999 dollars.  The total testing costs will be 540705 dollars, only $18.7\%$  of the total cost of the naive approach of using strictest proctoring for every student.





\section{Conclusions, discussions and future directions}
\label{sec: conclusions}

In applications, it is often required to test objects or people to determine their qualities in terms of certain metrics. However, besides being naturally noisy, the test results can be corrupted by adversarial behaviors of objects or people being tested (test takers). For example, dishonest test takers can cheat in the exams to distort the test results. With the development of AI technologies, such distortions are becoming more commonplace. In this paper, we propose optimal testing strategies which can still recover needed test results even if there are cheaters polluting the results.  The proposed testing strategies will optimally re-test selected group of test takers using different testing security measures. We determine the optimal testing strategies using a dynamic programming method.  In future works, we will extend our work with a continuous version of the testing problem where the number of tested students is large and the portion of students that needs a per-person proctoring cost of $x$ can be described by a (possibly-continuously-valued) probability density function of $p(x)$ which is defined over continuously-valued  $x$.  In this paper, we consider faithfully figuring out the top $k$ students; in future work, we can extend this by considering recovering other information faithfully, such as the ranking of top students or the ranking of the whole group of students. Note in this current paper, while we formulate the problem generally, we only investigate in detail the setting where a cheater has determined resource or budget for cheating in an exam. Thus our work can be extended to cases where the cheaters have more constraints on allocating their resources to different tests. For example, the cheaters have a constraint on the total budget that can be used for all the levels of tests. Thus it would also be interesting to extend this work to other sets of feasible actions for the cheaters.  We can also extend this paper to settings where the performances of test takers are random instead of deterministic. 

We remark that our proposal of re-testing using different levels of security measures is different from different levels of competitions of AMC series such as AMC, AIME, and USA(J)MO, which are more intended to test different aspects or levels of math strengths, such as skills in basic or advanced computational problems, or proof problems. Note that tests of different levels of security measures can be integrated with or separated from tests evaluating different levels or aspects of math strengths.  For example, if we want to select the USA(J)MO qualifiers from the AIME test (USA(J)MO qualification is also often considered a great honor for a student. There is also a need to do that selection accurately.), the test organizers can re-test students with AIME tests using different levels of security measures as optimally calculated by this paper. Note that normally the organizers aim to select around $500$ students out of around $5000$ AIME qualifiers to advance to USA(J)MO qualification. The preliminary AIME with a lower level of security measure can reduce the number of potential such candidates for USA(J)MO selection to around $1000$ students (when doing so, we conservatively assume that there are $500$ potential cheaters in the top $1000$ students emerging from the AIME of lower-level security, which is likely extreme). That $1000$ number is small enough such that all the $1000$ potential candidates can easily fit into test centers with way more strict security measures and can take another more-strictly-proctored AIME test across the USA within a uniform time slot. This will alleviate the concern of test center capacity and also the concern of students having to take the test in different time slots, compared with the strategy where we directly test all of the around $5000$ AIME qualifiers in high-security test centers. More importantly, this can optimally reduce the total proctoring costs while achieving faithful selection results.

\bibliography{Reference_20260213, Reference_addedagents,Reference_NullSpaceCondition}

\begin{thebibliography}{1}

\bibitem{fer333}
Olalekan~J. Akintande.
\newblock Artificial versus natural intelligence: Overcoming students' cheating likelihood with artificial intelligence tools during virtual assessment.
\newblock {\em Future in Educational Research}, 2(2):147--165, 2024.

\bibitem{bajnok2024amcmatters}
Bela Bajnok.
\newblock The {AMC} -- {W}hat it is and {W}hy it matters, 2024.

\bibitem{DJEBBARI_2025}
Zakia DJEBBARI.
\newblock A new era of exam cheating: The use of {ChatGPT} and the rise of academic dishonesty.
\newblock {\em ALTRALANG Journal}, 7(2):431–439, 2025.

\bibitem{kutyniok2012theoryapplicationscompressedsensing}
Y.C. Eldar and G.~Kutyniok.
\newblock {\em Compressed Sensing: Theory and Applications}.
\newblock Cambridge University Press, 2012.

\bibitem{AIbasedcheating}
Sandra Leaton~Gray, Dominic Edsall, and Dimitris Parapadakis.
\newblock Ai-based digital cheating at university, and the case for new ethical pedagogies.
\newblock {\em Journal of Academic Ethics}, 23:2069--2086, 05 2025.

\bibitem{article_1553831}
Zekeriya NartgÃn and Eugene Kennedy.
\newblock Cheating in higher education in the age of artificial intelligence.
\newblock {\em International Journal on Lifelong Education and Leadership}, 10(2):47--54, 2024.

\end{thebibliography}

\end{document}